\title{\LARGE \bf
Feedback Linearizable Discretizations of Second Order Mechanical Systems using Retraction Maps
}
\author{Shreyas N B$^{1}$, David Mart{\'\i}n Diego$^{2}$ and Ravi Banavar$^{3}$
\thanks{*David\ Mart\'{\i}n de Diego acknowledges financial support from the Spanish Ministry of Science and Innovation, under grants  PID2022-137909NB-C21, TED2021-129455B-I00 and  CEX2019-000904-S funded by MCIN/AEI\-/10.13039\-/501100011033.}
\thanks{$^{1,3}$
        Systems and Control Engineering, Indian Institute of Technology Bombay, Mumbai, 400076, Maharashtra, India  $^1${\tt\small shreyasnb@iitb.ac.in}, $^3${\tt\small banavar@iitb.ac.in} }
\thanks{$^{2}$ Instituto de Ciencias Matem\'aticas (CSIC-UAM-UC3M-UCM), Calle Nicol\'as Cabrera 13-15, 28049 Madrid, Spain
{\tt\small david.martin@icmat.es}}
}
\begin{document}

\maketitle
\thispagestyle{empty}
\pagestyle{empty}


\begin{abstract}

Mechanical systems are most often described by a set of continuous-time, nonlinear, second-order differential equations (SODEs) of a particular structure governed by the covariant derivative. The digital implementation of controllers for such systems requires a discrete model of the system and hence requires numerical discretization schemes. Feedback linearizability of such sampled systems, however, depends on the discretization scheme employed. In this article, we utilize retraction maps and their lifts to construct feedback linearizable discretizations for SODEs which can be applied to many mechanical systems.

\end{abstract}


\section{Introduction}

The design of control laws for mechanical systems has evolved rapidly in the last century. Digital controllers facilitate the implementation of these algorithms in various ways, a popular way being discretization schemes that evolve numerically with the continuous-time dynamics. While dealing with systems on Euclidean spaces, we could use Runge-Kutta, Euler discretization, etc., these discretization schemes, when applied to general manifolds, do not assure that the system states stay on the manifold. In other words, we must know how to move on the manifold. For example, on the Riemannian manifold $(M, g)$, this notion would be described by the exponential map $\exp_x: T_x M \mapsto M$ at the point $x \in M$. Thus, there is a need for a universal tool that allows us to construct integrators that respect the underlying geometry of general manifolds. Retraction and discretization maps generalize Euclidean discretizations on general manifolds \cite{21MBLDMdD}. 

Another aspect related to the control of mechanical systems that has been under study for some time is transforming nonlinear systems to a locally linear system and an invertible control law through a coordinate transformation. Feedback linearization allows us to do this, enabling the utilization of standard powerful methods that can be applied to linear systems like pole placement to design control laws for nonlinear systems. This has led to the work on constructing feedback linearizable discretizations for first-order nonlinear systems (refer \cite{ashutosh}). 
Although feedback linearization for continuous-time systems has been developed well in \cite{brockett1978feedback}, \cite{jacubczyk1980linearization}, \cite{isidori1982feedback}, when addressing mechanical systems of the second-order, we must ensure that the coordinate transformation yields a linear system which is also mechanically equivalent to the original mechanical system, i.e., we must design feedback which linearizes the mechanical system, while simultaneously preserving its mechanical structure (see \cite{10076262},  \cite{nowicki}, \cite{NOWICKI2023111098}, \cite{RESPONDEK2012102}).\\
\textbf{Contribution}: In this article, given a mechanical system that falls under a specific class, we utilize lifts of retraction maps to construct discretizations that are feedback linearizable. We propose a structured discretization scheme that can be applied to second-order mechanical systems. \\
\textbf{Organization}: The article is organized as follows: Section \ref{sec:mech} introduces us to the field, with the definition of a mechanical control system, as defined in \cite{10076262},\cite{RESPONDEK2012102}. In Section \ref{sec:fl}, we look at the feedback linearizability of such mechanical systems in continuous time, where we define the existence of a specific class of systems that are feedback linearizable while simultaneously preserving its mechanical structure. In Section \ref{sec:disc}, we discuss discretization maps and their tangent lifts to construct integrators on the double-tangent bundle $TTM$ for second-order systems. We present our main result in Section \ref{sec:fl_disc}, where we utilize the tangent lifts defined in Section \ref{sec:disc} to construct a feedback linearized discrete system, and we demonstrate these results on a simple mechanical system in Section \ref{sec:example}. 


\subsection{Previous Work}

\subsubsection{Retraction and Discretization Maps} 

\begin{defn}
\label{eq:ret}
    Let $M$ be an $n$ dimensional manifold, and $TM$ be its tangent bundle. Denote the canonical projection on the manifold to be $\tau_M(x,y) = x$, where $\tau_M: TM \rightarrow M$. We define a \textbf{retraction map} on a manifold $M$ as a smooth map $R: TM \to M$, such that if $R_x$ be the restriction of $R$ to $T_x M$, then the following properties are satisfied:

    \begin{enumerate}
        \item $R_x (0_x) = x$ where $0_x$ is the zero element of $T_x M$.
        \item $\text{D}R_x (0_x ) = T_{0_x} R_x = \text{Id}_{T_x M} $, where $\text{Id}_{T_x M}$ denotes the identity mapping on $T_x M$.
    \end{enumerate}
    Define an open neighborhood $U \subset TM$ around the zero section of the tangent bundle. If $(x,y) \in U$, where $y\in T_xM$,  then $R_d: U \rightarrow M \times M$ is called a \textbf{discretization map}, if it satisfies the following properties:
\begin{enumerate}
    \item $R_d(x, 0_x) = (x,x)$
    \item $T_{(x,0_x)}R^2 - T_{(x, 0_x)}R^1 = \text{Id}_{T_x M}$, which is the identity map on $T_x M$ for any $x \in M$.
\end{enumerate}
\end{defn}
We note that here, we define $R_d(x, v) = (R^1_x(v), R^2_x(v))$, and $T_{(x,0_x)}R^a = \text{D}R^a(x,0_x)$ for $a=1,2$.
Consequently, it can be shown that any discretization map $R$ is a local diffeomorphism.

Given a vector field $X \in \mathfrak{X}(M)$ on $M$ and a discretization map $R_d$ and a fixed time discretization map $t \mapsto (t-\alpha h, t+(1-\alpha)h), \ \alpha \in [0,1]$, the discretization of $X$ (for a small enough step-size $h > 0$) is defined by:
\begin{equation}
    R_d^{-1} (x_k, x_{k+1}) = h X(\tau_M(R_d^{-1}(x_k, x_{k+1})))
\end{equation}

Let us consider an example where $R_d(x, v) = (x, x+v)$. If we have $\dot{x} = X(x(t))$, then this choice of discretization map yields the scheme $x_{k+1} = x_k + hX(x_k)$ where $x(t_k)=x_k$. This is the standard \textit{forward Euler} scheme. Similarly $R_d(x,v) = (x-v, x)$ yields the \textit{backward Euler} method.

\begin{rmk}
More generally if we fix  a discretization  map on the space ${\mathbb R}\times M$ given by $\tilde{R}_d: T({\mathbb R}\times M)\rightarrow ({\mathbb R}\times M)\times ({\mathbb R}\times M)$ and a time-dependent dynamics given by $\dot{x}=X(t, x)$, we define the corresponding discretization as 
\begin{align*}
 &(\hbox{pr}_{TM}\circ \tilde{R}_d^{-1}) (t_k, x_k, t_{k+1}, x_{k+1})\\
 &= h X(\tau_{{\mathbb R}\times M}(\tilde{R}_d^{-1}(t_k, x_k, t_{k+1}, x_{k+1})))\\
& h= (\hbox{pr}_{2,\mathbb R}\circ \tilde{R}_d^{-1}) (t_k, x_k, t_{k+1}, x_{k+1})
\end{align*}
where $\hbox{pr}_{TM}: T({\mathbb R}\times M)\rightarrow TM$ and 
$\hbox{pr}_{T\mathbb R}: T({\mathbb R}\times M)\rightarrow T{\mathbb R}$ are the canonical projections and $\hbox{pr}_{2,\mathbb R}=\hbox{pr}_2\circ \hbox{pr}_{T\mathbb R}$.

Moreover, these extended retractions allow us to introduce discrete time reparametrizations (discrete Sundman transformations) adding to the dynamics given by $\dfrac{dx}{dt}=X(t, x)$ the time transformation $\dfrac{d\tau}{dt}=\dfrac{1}{f(x)}$ where $f: M\rightarrow {\mathbb R}$ verifies $f>0$ (see \cite{Cari} and references therein).
\end{rmk}

\subsubsection{Feedback Linearization}
Let $M$ be an $n$-dimensional manifold and $U \in \R[n]$ be open subset representing the control variables. For $u \in U$, let $X(\cdot,u) \in \mathfrak{X}(M)$ be a vector field on $M$. Then, for a fixed time $T > 0$, a continuous-time dynamical system on $M$ is given by
\begin{equation}
\label{eq:1}
    \frac{d}{dt}x(t) = X(x(t), u(t)) \ \forall t \in [0, T] 
\end{equation}
Let $M$ and $N$ be two $n$-dimensional manifolds and $\phi: M \mapsto N$ be a diffeomorphism. If $X \in \mathfrak{X}(M)$ is a vector field on $M$, then $X_{\phi}$ is a vector field on $N$ such that $X_{\phi} := T\phi \circ X \circ \phi^{-1}$, which yields the following mechanical system on $N$:
\begin{equation}
    \dfrac{d}{dt}\tilde{x}(t) = X_{\phi} (\tilde{x}(t), \tilde{u}(t))
\end{equation}

with $\tilde{x}(0) = \phi(x(0))$ and $\tilde{x}(t) = \phi(x(t))$ for all $t \in [0, T]$.
\begin{defn}
    A given system \ref{eq:1} is said to be locally feedback linearizable around a specified point if there exist matrices $A \in \R[n \times n]$ and $B \in \R[n \times m]$ such that $X_{\phi}(\tilde{x}, v) = A\tilde{x} + Bv$ where $v = \psi(\phi^{-1}(\tilde{x}), u)$, $\psi$ being the linearizing feedback, and the corresponding feedback linearized dynamical system is given by:
    \begin{equation}
        \dfrac{d}{dt} \tilde{x}(t) = A \tilde{x}(t) + B v(t) \ \forall t \in [0, T]
    \end{equation}
\end{defn}

\vspace{10pt}
For detailed proofs and background on feedback linearization, we refer the reader to \cite{ashutosh} -- \cite{isidori1982feedback}.

\subsubsection{Constructing feedback linearizable discretization maps}

From \cite{ashutosh}, we thus formulate the construction of feedback linearizable discretization maps, by lifting the discretization map $R_d$ on $TN$ to $R_{d,\phi}$ on $TM$ as shown in Figure \ref{fig:commutator}.
\begin{figure}[h]
    \centering
    \begin{tikzpicture}
  \matrix (m) [matrix of math nodes,row sep=3em,column sep=4em,minimum width=2em]
  {
     TM & TN \\ M \times M & N \times N \\};
  \path[-stealth]
    (m-1-1) edge node [above] {$T\phi$} (m-1-2)
    (m-1-1) edge node [left] {$R_{d,\phi}$} (m-2-1)
    (m-1-2) edge node [right] {$R_d$} (m-2-2)
    (m-2-1) edge node [below] {$\phi \times \phi$} (m-2-2);
\end{tikzpicture}
    \caption{$R_d$ and $R_{d,\phi}$ commute as shown}
    \label{fig:commutator}
\end{figure}
The key concept from \cite{ashutosh}, is the following:
\begin{prop}
    Let $\phi: M \lra N:= \R[n] $ be the linearizing coordinate change, and $\psi: M \times U \lra \R[m]$ be the linearizing feedback, where $U \subset \R[m]$ is the control space. Let $R_d$ be a discretization map on $N$ that discretizes the continuous-time linear system (CTLS) to a discrete-time linear system (DTLS). Then,
    \begin{equation}
        R_{d,\phi} = (\phi \times \phi)^{-1}  \circ R_d \circ T \phi
    \end{equation}
    is a discretization map on $M$, which discretizes the continuous-time system (CTS) to a feedback linearizable discrete-time system (DTS). Here, the linearizing coordinate is given by $x_k = \phi^{-1}(\tilde{x}_k)$ and the auxiliary control $u_k$ is given by $\tilde{u}_k = \psi(x_k, u_k)$ for the DTS. 
\end{prop}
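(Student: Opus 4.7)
The plan is to verify that $R_{d,\phi}$ satisfies the two defining properties of a discretization map from Definition \ref{eq:ret}, and then separately confirm that the resulting discrete dynamics, expressed in the linearizing coordinate $\tilde{x} = \phi(x)$ with auxiliary input $v = \psi(x,u)$, reduce to the DTLS obtained by applying $R_d$ to the continuous-time linear system. Throughout, the key structural fact I would exploit is that $T\phi$ is a vector bundle isomorphism $TM \to TN$ covering $\phi$, so it sends the zero section of $TM$ to the zero section of $TN$, and its fiberwise restriction $T_x\phi : T_x M \to T_{\phi(x)}N$ is a linear isomorphism.

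For the first defining property, I would compute $R_{d,\phi}(x,0_x)$ by composition. Since $T\phi(x,0_x) = (\phi(x), 0_{\phi(x)})$ and $R_d$ satisfies $R_d(\phi(x), 0_{\phi(x)}) = (\phi(x), \phi(x))$ by hypothesis, applying $(\phi\times\phi)^{-1}$ returns $(x,x)$. For the second property, write $R_{d,\phi}^a = \phi^{-1}\circ R_d^a \circ T\phi$ for $a = 1,2$. Differentiating at $(x,0_x)$ and applying the chain rule, both $T_{(x,0_x)}R_{d,\phi}^1$ and $T_{(x,0_x)}R_{d,\phi}^2$ factor through the same maps $T\phi$ on the left factor of the tangent bundle; their difference collapses to $T_{\phi(x)}\phi^{-1}\circ\bigl(T_{(\phi(x),0_{\phi(x)})}R_d^2 - T_{(\phi(x),0_{\phi(x)})}R_d^1\bigr)\circ T_x\phi$, which by the discretization property of $R_d$ on $N$ equals $T_{\phi(x)}\phi^{-1}\circ \operatorname{Id}_{T_{\phi(x)}N}\circ T_x\phi = \operatorname{Id}_{T_xM}$.

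For the linearization claim, I would invert the defining relation: $R_{d,\phi}^{-1} = (T\phi)^{-1}\circ R_d^{-1}\circ(\phi\times\phi)$. Given the CTS $\dot x = X(x,u)$ and its image $\dot{\tilde x} = X_\phi(\tilde x,\tilde u) = A\tilde x + B\tilde u$ under the linearizing change of coordinates and feedback $\tilde u = \psi(x,u)$, applying the discretization rule with $R_d$ on $N$ produces the DTLS $R_d^{-1}(\tilde x_k, \tilde x_{k+1}) = h(A\tilde x + B\tilde u)(\tau_N(\cdot))$. Substituting $\tilde x_k = \phi(x_k)$ and using the commutative diagram in Figure \ref{fig:commutator} (namely $T\phi\circ R_{d,\phi}^{-1} = R_d^{-1}\circ(\phi\times\phi)$ together with $\tau_N \circ T\phi = \phi\circ\tau_M$), I would rewrite this equation in the $(x_k,x_{k+1})$ variables and recognize the result as the discretization of the CTS via $R_{d,\phi}$. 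Hence the DTS obtained from $R_{d,\phi}$ is, under the \emph{same} coordinate change $\phi$ and feedback $\psi$ used in continuous time, exactly the DTLS, which is the required feedback linearizability statement.

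The step I expect to demand the most care is the cancellation in property two: one must keep careful track of the identification of $T_{(x,0_x)}TM$ with $T_xM\oplus T_xM$ (horizontal plus vertical at the zero section) so that the two differentials $T_{(x,0_x)}R_{d,\phi}^a$ make sense as maps on the vertical fiber $T_xM$ before one subtracts them; the standard trick is to restrict to curves of the form $t\mapsto(x, tv)$ and differentiate at $t=0$, which keeps the bookkeeping linear and allows the $T\phi$ factors to be pulled outside cleanly. Everything else reduces to functoriality of the tangent functor and the chain rule.
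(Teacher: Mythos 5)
Your proposal is correct: the paper itself states this proposition without proof (citing it as a known result from \cite{ashutosh}), but your argument --- checking the two defining properties via the chain rule at the zero section, restricting to curves $t\mapsto(x,tv)$ so the difference of differentials collapses to $T_{\phi(x)}\phi^{-1}\circ\operatorname{Id}\circ T_x\phi$, and then transporting the discretization equation through $T\phi\circ R_{d,\phi}^{-1}=R_d^{-1}\circ(\phi\times\phi)$ --- is precisely the technique the paper uses when it proves the tangent-lifted analogue in Proposition~\ref{prop:lift-commutator}. No gaps; the only point worth flagging is that, as in the paper, the feedback is nominally evaluated at $x_k$ in the statement while the vector field is evaluated at $\tau_M(R_{d,\phi}^{-1}(x_k,x_{k+1}))$, a discrepancy the paper itself only resolves later (in Theorem~\ref{theorem-feed}) by evaluating the feedback at that intermediate point.
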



\section{Mechanical Control Systems}
\label{sec:mech}

We define a mechanical control system and its feedback linearization as proposed in \cite{10076262} and \cite{NOWICKI2023111098}.

\begin{defn}
    A mechanical control system $(\mathcal{MS})_{(n,m)}$ is defined by a $4$-tuple $(M, \nabla, \mathfrak{g}, e)$ where:
    \begin{itemize}
        \item $M$ is an $n$-dimensional manifold
        \item $\nabla$ is a symmetric affine connection on $M$
        \item $\mathfrak{g} = \{g_1, \dots, g_m\}$ is an $m$-tuple of control vector fields on $M$
        \item $e$ is an uncontrolled vector field on $M$
    \end{itemize}
    $(\mathcal{MS})_{(n,m)}$ can be represented by the differential equation:
    \begin{equation}
        \nabla_{\dot{x}} \dot{x} = e(x) + \sum_{r=1}^m g_r(x) u_r 
    \end{equation}
    Or equivalently in local coordinates $x = (x^1, \dots, x^n)$ on $M$, 
    \begin{equation}\label{SODE-initial}
        \ddot{x}^i = - \Gamma ^i_{jk}(x)\dot{x}^j \dot{x}^k + e^i(x) + \sum_{r=1}^m g^i_r(x)u_r
    \end{equation}
    If we write this as two first-order differential equations:
    \begin{equation}\label{SODE-nonlinear}
        \dot{x}^i  = y^i; \
        \dot{y}^i  = - \Gamma^i_{jk}(x)y^jy^k + e^i(x) + \sum_{r=1}^m g_r^i(x)u_r
    \end{equation}
\end{defn}


\section{Mechanical Feedback Linearization}
\label{sec:fl}
We consider the problem of bringing a mechanical system $(\mathcal{MS})_{(n,m)}$ into a linear mechanical form through a transformation and mechanical feedback. This \textit{M}echanical \textit{F}eedback linearization (or $MF$-linearization \cite{nowicki}) must preserve the structure of the tangent bundle $TM$

We can see that the diffeomorphism on $M$ induces a mechanical diffeomorphism on $TM$ given by $(\tilde{x}, \tilde{y}) = \left(\phi(x), D\phi(x)y \right)$. The proof of the action of $MF$ preserving the mechanical structure of $(\mathcal{MS})_{(n,m)}$ can be referred to in more detail in \cite{nowicki}.

According to the definition in Appendix B. [\ref{appendix-b} \ref{eq:mf_diff}], the $MF$-transformations act on the vector fields $g_r$ and $e$ through $(\phi, \alpha, \beta)$. We define the following relevant distributions:
\begin{equation}
    \begin{split}
        \mathcal{E}^0 & = \text{span} \{ g_r, 1 \leq r \leq m \} \\
        \mathcal{E}^j & = \text{span} \{ \text{ad}^i_e g_r, 1 \leq r \leq m, 0 \leq i \leq j \}
    \end{split}
\end{equation}
Thus, we state the following theorem:
\begin{thm}\label{thm:mfl}
    A mechanical system $(\mathcal{MS})_{(n,m)}$ is said to be mechanical feedback ($MF$) linearizable, locally around $x_0 \in M$ if and only if, in the neighborhood of $x_0$, it satisfies the following conditions:

    \begin{enumerate}
        \item $(ML1)$ $\mathcal{E}^0$ and $\mathcal{E}^1$ are of constant rank
        \item $(ML2)$ $\mathcal{E}^0$ is involutive
        \item $(ML3)$ $\text{ann } \mathcal{E}^{0} \subset \text{ann } \mathcal{R}$
        \item $(ML4)$ $\text{ann } \mathcal{E}^0 \subset \text{ann } \nabla g_r \ \forall r: 1 \leq r \leq m$
        \item $(ML5)$ $\text{ann } \mathcal{E}^1 \subset \text{ann } \nabla^2 e$
    \end{enumerate}
\end{thm}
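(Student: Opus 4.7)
The plan is to prove the two implications of Theorem \ref{thm:mfl} separately, treating necessity by an invariance argument and sufficiency by a staged normal-form construction in which each of the five conditions enters at a distinct stage.

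For necessity ($\Rightarrow$), I would first observe that each of (ML1)--(ML5) is invariant under $MF$-transformations, since all five are phrased in intrinsic terms: constancy of rank, involutivity of a distribution, and annihilator containments involving covariant derivatives and the curvature tensor $\mathcal{R}$. It then suffices to verify them on the linear target system $\ddot{\tilde x} = A\tilde x + Bv$ with $\Gamma^i_{jk}\equiv 0$, constant columns of $B$ playing the role of the $g_r$, and $e(\tilde x)=A\tilde x$. In this normal form $\mathcal{E}^0 = \text{span}\{B_1,\ldots,B_m\}$ has constant rank and is involutive because brackets of constant fields vanish; $\mathcal{R}\equiv 0$; each $\nabla g_r = 0$ since $g_r$ is parallel under a flat connection; and $\nabla^2 e \equiv 0$ because $e$ is linear while $\nabla$ is flat. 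Transporting these properties back through the $MF$-transformation furnishes (ML1)--(ML5) on the original system.

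For sufficiency ($\Leftarrow$), I would construct the triple $(\phi,\alpha,\beta)$ in four stages. \emph{First}, by (ML1) and (ML2), the Frobenius theorem supplies local coordinates on $M$ in which $\mathcal{E}^0$ is straightened to the span of the first $m$ coordinate directions. \emph{Second}, using (ML4), the inclusion $\text{ann}\,\mathcal{E}^0 \subset \text{ann}\,\nabla g_r$ forces the $g_r$ to be $\nabla$-parallel along the transverse directions; a suitably chosen mechanical feedback $\beta$ then converts $\{g_r\}$ into a constant basis in the chart. \emph{Third}, (ML3) says the curvature vanishes when contracted against the transverse covectors, so the connection is flat along the directions relevant to the linearization, and a Poincar\'e-lemma-type integration produces a further coordinate refinement trivializing the Christoffel symbols in those directions. \emph{Fourth}, (ML5) implies that $\nabla^2 e$ vanishes on the distribution $\mathcal{E}^1$, so the drift has at most linear dependence on the straightened coordinates; an affine feedback $\alpha$ absorbs the zeroth-order remainder, leaving $e$ as the linear vector field $\tilde x\mapsto A\tilde x$.

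The central obstacle is coherence between these four stages. The chart produced in stage~1 must be compatible with the subsequent feedbacks and coordinate refinements in stages~2--4, so what one ultimately must show is that the PDE system whose integrability is encoded by (ML3)--(ML5) admits a simultaneous solution in an adapted chart. Controlling the cross-interactions between the feedback pair $(\alpha,\beta)$ and the diffeomorphism $\phi$, while preserving the mechanical bundle structure so that $(\tilde x,\tilde y)=(\phi(x), D\phi(x)y)$ remains a diffeomorphism of $TM$, is where the real technical work sits; the flag $\mathcal{E}^0\subset\mathcal{E}^1$ is exactly what keeps this layered construction from becoming overdetermined.
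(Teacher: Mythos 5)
The paper itself offers no proof of Theorem \ref{thm:mfl}: it is quoted from the cited works of Nowicki and Respondek, so your proposal can only be judged against the standard argument there. Your overall architecture (necessity by invariance plus verification on the linear normal form; sufficiency by a staged construction of $(\phi,\alpha,\beta,\gamma)$) is the right one, but two concrete gaps remain.

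First, in the necessity direction you assert that (ML1)--(ML5) are $MF$-invariant ``since all five are phrased in intrinsic terms.'' That is not the reason, and as stated it is false: an $MF$-transformation replaces the connection by $\nabla - \sum_r g_r\otimes\gamma^r$ and the drift by $e+\sum_r g_r\alpha^r$, so $\mathcal{R}$, $\nabla g_r$ and $\nabla^2 e$ all genuinely change. The conditions are invariant only because they are stated as annihilator inclusions relative to $\mathcal{E}^0$ (resp.\ $\mathcal{E}^1$), and because one can check that the tensors change by terms valued in $\mathcal{E}^0$ --- a computation that itself uses (ML2) and the invariance of the flag $\mathcal{E}^0\subset\mathcal{E}^1$ under feedback. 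Without that computation the transport ``back through the $MF$-transformation'' is unjustified.

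Second, in the sufficiency direction you never use the quadratic feedback $\gamma$, which is indispensable. Conditions (ML3) and (ML4) only control the components of $\mathcal{R}$ and $\nabla g_r$ transverse to $\mathcal{E}^0$; the components along $\mathcal{E}^0$ cannot be removed by any choice of chart and are absorbed precisely by choosing $\gamma$ so that $\nabla-\sum_r g_r\otimes\gamma^r$ becomes a flat symmetric connection, whose affine coordinates then furnish $\phi$. Your stage~3 (``a Poincar\'e-lemma-type integration \ldots trivializing the Christoffel symbols in those directions'') silently assumes this flatness, which is exactly the point at issue. Relatedly, your stage ordering (Frobenius chart first, connection second) must be replaced by, or reconciled with, the construction of the flat modified connection; as written, the ``coherence between the four stages'' that you flag as the central obstacle is left entirely unresolved, so the sketch does not yet constitute a proof of sufficiency.
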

where $\mathcal{R}$ is the Riemannian curvature tensor.


Note that the above conditions are valid without the assumption of controllability of the linearized mechanical system.
 The following proposition is explicitly stated for planar mechanical systems where $n=2$. (refer \cite{nowicki}).

\begin{prop}
\label{prop:planar_mech}
A planar mechanical system $\mathcal{(MS)}_{(2,1)}$ is locally $MF$-linearizable at $x_0 \in M$ to a controllable $\mathcal{(LMS)}_{(2,1)}$, if and only if it satisfies the following conditions:
\begin{enumerate}
    \item $(MD1)$ $g$ and $\text{ ad}_e g$ are independent
    \item $(MD2)$ $\nabla_g g \in \mathcal{E}^0$ and $\nabla_{\text{ad}_e g} g \in \mathcal{E}^0$
    \item $(MD3)$ $\nabla^2_{g, \text{ad}_e g} \text{ad}_e g - \nabla^2_{\text{ad}_e g, g} \text{ad}_e g \in \mathcal{E}^0$
\end{enumerate}
\end{prop}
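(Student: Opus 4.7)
The plan is to derive Proposition~\ref{prop:planar_mech} as the specialization of Theorem~\ref{thm:mfl} to the case $n=2$, $m=1$ together with the extra requirement that the target linear system be controllable. In this regime $\mathcal{E}^0 = \text{span}\{g\}$ has rank at most one and $\mathcal{E}^1 = \text{span}\{g, \text{ad}_e g\}$ has rank at most two, so controllability of the linearization forces $\mathcal{E}^1 = TM$, which is exactly (MD1). Thus the first move is to identify (MD1) simultaneously with controllability and with the constant-rank hypothesis (ML1).

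Several conditions of Theorem~\ref{thm:mfl} then become vacuous in this setting. Since $\mathcal{E}^0$ is one-dimensional, it is automatically involutive, and (ML2) holds for free. Since (MD1) forces $\mathcal{E}^1 = TM$, we have $\text{ann}\,\mathcal{E}^1 = \{0\}$, so (ML5) imposes no condition. What remains is to match (ML4) with (MD2) and (ML3) with (MD3). The first match is immediate: $\text{ann}\,\mathcal{E}^0 \subset \text{ann}\,\nabla g$ is equivalent to $\nabla_X g \in \mathcal{E}^0$ for every vector field $X$, and since $\{g, \text{ad}_e g\}$ is a local frame under (MD1), it suffices to test $X = g$ and $X = \text{ad}_e g$, yielding exactly (MD2).

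The delicate step is the curvature one: turning (ML3) into (MD3). Using $C^\infty(M)$-linearity and antisymmetry of $R$ in its first two slots, $\text{ann}\,\mathcal{E}^0 \subset \text{ann}\,\mathcal{R}$ reduces in dimension two to the two conditions $R(g, \text{ad}_e g)\,g \in \mathcal{E}^0$ and $R(g, \text{ad}_e g)\,\text{ad}_e g \in \mathcal{E}^0$. Under (MD2), writing $\nabla_X g = \lambda(X)\, g$ for a local one-form $\lambda$, a direct computation gives $R(X, Y)\, g = d\lambda(X, Y)\, g \in \mathcal{E}^0$, so the first of the two curvature conditions is automatic. For the second, the Ricci identity for the symmetric (torsion-free) connection $\nabla$,
\[
\nabla^2_{X, Y} Z - \nabla^2_{Y, X} Z = R(X, Y)\, Z,
\]
identifies the bracket appearing in (MD3) with $R(g, \text{ad}_e g)\, \text{ad}_e g$. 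Hence (MD3) supplies precisely the remaining part of (ML3).

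The main obstacle I anticipate is this curvature step: one must carefully align the sign and index conventions for $\nabla^2$ and the Ricci identity used in Theorem~\ref{thm:mfl} with those in (MD3), and then verify that (MD2) genuinely absorbs the $Z = g$ case of (ML3) so that (MD3) is the sole remaining curvature constraint. Once these checks are in place, combining the four steps shows that under controllability the conditions (ML1)--(ML5) are equivalent to (MD1)--(MD3), establishing the proposition in both directions.
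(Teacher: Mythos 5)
The paper itself gives no proof of Proposition~\ref{prop:planar_mech}; it is imported verbatim from the cited reference of Nowicki--Respondek, so there is nothing in the text to compare you against. Taken on its own terms, your derivation of the planar case as the $n=2$, $m=1$ specialization of Theorem~\ref{thm:mfl} plus controllability is sound, and every reduction you make checks out: rank-one distributions are involutive so $(ML2)$ is free; $(MD1)$ forces $\text{ann}\,\mathcal{E}^1=\{0\}$ so $(ML5)$ is vacuous; $(ML4)$ restricted to the frame $\{g,\text{ad}_e g\}$ is $(MD2)$; and under $(ML4)$ one indeed gets $\nabla_X g=\lambda(X)g$ and hence $R(X,Y)g=d\lambda(X,Y)g\in\mathcal{E}^0$, so that of the two curvature conditions surviving in dimension two only $R(g,\text{ad}_e g)\,\text{ad}_e g\in\mathcal{E}^0$ remains, which the Ricci identity $\nabla^2_{X,Y}Z-\nabla^2_{Y,X}Z=R(X,Y)Z$ for the torsion-free $\nabla$ converts into $(MD3)$. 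Two points deserve to be made explicit rather than left implicit. First, in the ``only if'' direction you need that $\mathcal{E}^0$ and $\mathcal{E}^1$ are invariant under the $MF$-group $(\phi,\alpha,\beta,\gamma)$ (invertibility of $\beta$ for $\mathcal{E}^0$, and $[e+\alpha g,\beta g]\equiv\beta[e,g]\bmod\mathcal{E}^0$ for $\mathcal{E}^1$), so that controllability of the target, i.e.\ $\text{span}\{b,Ab\}={\mathbb R}^2$, pulls back to $(MD1)$; $(ML1)$ alone only gives constant rank, not full rank. Second, the convention check you flag is genuinely the crux: with $\nabla^2_{X,Y}Z=\nabla_X\nabla_Y Z-\nabla_{\nabla_X Y}Z$ the identification of the $(MD3)$ bracket with $R(g,\text{ad}_e g)\,\text{ad}_e g$ is exact, and this is consistent with the worked example in Section~\ref{sec:example}, where the two second covariant derivatives in $(MD3)$ coincide and their difference vanishes. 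With those two items spelled out, your argument is complete; note only that the original reference proves the planar statement constructively (exhibiting the linearizing outputs), whereas your route leans on the unproved general Theorem~\ref{thm:mfl}, so it is a tidier but less self-contained path.
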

\vspace{5pt}

\begin{defn}
    A mechanical control system $(\mathcal{MS})_{(n,m)} = (M, \nabla, \mathfrak{g}, e)$ is called $MF$-linearizable if it is $MF$-equivalent to a linear mechanical system $(\mathcal{LMS})_{(n,m)} = (\R[n], \bar{\nabla}, \mathfrak{b}, A \tilde{x})$, where $\bar{\nabla}$ is an affine connection with the Christoffel symbols zero ($\bar{\nabla}$ is a flat connection) and $\mathfrak{b} = \{b_1, \dots, b_m \}$ are constant vector fields. In other words, there exists $(\phi, \alpha, \beta, \gamma) \in MF$ such that 
    \begin{equation}
        \begin{split}
            \phi: M \lra N \ \ \phi(x)  = \tilde{x} \\
            \phi_*\left(\nabla - \sum_{r=1}^m g_r \otimes \gamma^r\right) = \bar{\nabla} \\
            \phi_*\left( \sum_{r=1}^m \beta^r_s g_r \right) = b_s, \ 1 \leq s \leq m \\
            \phi_* \left( e + \sum_{r=1}^m g_r \alpha^r \right) = A \tilde{x}
        \end{split}
    \end{equation}
    Equivalently, we have the corresponding linear mechanical system $(\mathcal{LMS})_{(n,m)}$ as:
    \begin{equation}\label{sode-lin}
            \dot{\tilde{x}}  = \tilde{y}; \ 
            \dot{\tilde{y}}  = A \tilde{x} + \sum_{s=1}^m b_s \tilde{u}_s
    \end{equation}
\end{defn}

\vspace{5pt}

\begin{figure*}
    \centering
    \begin{tikzpicture}[scale=0.8, transform shape]
  \matrix (m) [matrix of math nodes,row sep=3.5em,column sep=5em,minimum width=0.1em]
  {
     TM & TM &  TN & TN \\ TTM & TTM & TTN & TTN \\T(M \times M) & TM \times TM & TN \times TN & T(N \times N) \\ {} & M \times M & N \times N & {} \\};
  \path[-stealth]
    (m-1-2) edge node [above] {$T\phi$} (m-1-3)
    (m-2-2) edge node [above] {$TT\phi$} (m-2-3)
    (m-3-2) edge node [above] {$T\phi \times T\phi$} (m-3-3)
    (m-2-1) edge node [left] {$T\tau_M$} (m-1-1)
    (m-2-2) edge node [left] {$\tau_{TM}$} (m-1-2)
    (m-2-3) edge node [right] {$\tau_{TN}$} (m-1-3)
    (m-2-4) edge node [right] {$T\tau_N$} (m-1-4)
    (m-2-1) edge node [left] {$TR$} (m-3-1)
    (m-2-2) edge node [left] {$R_d^T$} (m-3-2)
    (m-2-3) edge node [right] {$R_{d,\phi}^T$} (m-3-3)
    (m-2-4) edge node [right] {$TR_{d,\phi}$} (m-3-4)
    (m-3-2) edge node [left] {$\tau_{M \times M}$} (m-4-2)
    (m-3-3) edge node [right] {$\tau_{N \times N}$} (m-4-3)
    (m-4-2) edge node [above] {$\phi \times \phi$} (m-4-3);
  \path[<->]
  (m-2-1) edge node [above] {$\kappa_M$} (m-2-2)
  (m-2-3) edge node [above] {$\kappa_N$} (m-2-4);
  \draw[double, double distance=2pt]
    (m-1-1) -- (m-1-2)
    (m-1-3) -- (m-1-4)
    (m-3-1) -- (m-3-2)
    (m-3-3) -- (m-3-4);
  \draw[->, rounded corners]
  (m-1-1) -- node[pos=1,fill=white,inner sep=0pt]{} ++(-3,0) |- node[above right, inner sep=10pt]{$R_d$} (m-4-2);
  \draw[->, rounded corners]
  (m-1-4) -- node[pos=1, fill=white, inner sep=0pt]{} ++ (3,0) |- node[above left, inner sep=10pt]{$R_{d,\phi}$} (m-4-3);
\end{tikzpicture}
    \caption{$R_d^T$ and $R^T_{d,\phi}$ commute as shown. Description in \ref{sec:double-comm}.}
    \label{fig:double-comm}
\end{figure*}

\section{Tangent Lifts of Discretization Maps}
\label{sec:disc}

We have already seen the definition of a discretization map on $M$ given by $R_d: TM \mapsto M \times M$. To perform discretization in second-order differential equations, we require the definition of a discretization map on $TM$, denoted by $R_d^T: TTM \mapsto TM \times TM$, where $TTM$ is the double tangent bundle of $M$ \cite{ABCM2023,21MBLDMdD}.

Let $M$ be an $n$-dimensional manifold and $\tau_M: TM \lra M$ be the canonical projection map of the tangent bundle, and $TTM$ be the double tangent bundle. The manifold $TTM$ intrinsically admits two vector bundle structures such that the first vector bundle structure is canonical with the vector bundle projection $\tau_{TM}: TTM \lra TM$, whereas the second vector bundle structure, the map $T\tau_M: TTM \lra TM$ gives the projection.

Thus, to define $R_d^T$ from a given discretization map $R_d$, we would have to use the canonical involution map $\kappa_M: TTM \lra TTM$, which shows the dual vector bundle structure of $TTM$. Let $(x,y) \in TM$ and $(x,y,\dot{x}, \dot{y}) \in TTM$ be the canonical coordinates. Then:
\begin{equation}
    \kappa_M(x, y, \dot{x}, \dot{y}) = (x, \dot{x}, y, \dot{y})
\end{equation}
It can be shown that $\kappa^2_M = \text{ Id}_{TTM}$ implying $\kappa_M$ is an involution of $TTM$.

\begin{prop}
\label{prop:lift}
    If $R_d$ is a discretization map on $M$, then $R_d^T = TR_d \circ \kappa_M$ is a discretization map on $TM$.
    \newline
    
    \begin{proof}
        For $(x,y,\dot{x}, \dot{y}) \in TTM$, we have that $TR_d(x,y,\dot{x}, \dot{y}) = \left( R_d(x,y), D_{(x,y)} R_d(x,y) (\dot{x}, \dot{y})^T \right)$ and
        $$R_d^T(x, \dot{x}, y, \dot{y}) = (R_d(x,y), D_{(x,y)}R_d(\dot{x}, \dot{y})^T) $$

        Using the properties defined in Definition \ref{eq:ret}, 
        \begin{enumerate}
            \item We know that $R_d(x, 0) = (x,x) \ \forall x \in M$. Thus,
            \begin{equation*}
            \begin{split}
                R_d^T(x,\dot{x}, 0, 0) & = \left( R_d(x, 0), D_{(x,0)} R_d(\dot{x}, 0) \right) \\
                & = (x,x, \dot{x}, \dot{x}) \equiv (x, \dot{x}, x, \dot{x})
            \end{split}
            \end{equation*}
             
            where we trivially identify $T(M \times M) \equiv TM \times TM$.
            \item For this property, we know that
            $$R_d^T(x, \dot{x}, y, \dot{y}) = \left(TR_d^1(x, y; \dot{x}, \dot{y}), TR_d^2(x, y; \dot{x}, \dot{y}) \right)$$
            So, we need to compute 
            $$T_{(0,0)_{(x,\dot{x})}}(TR_d^a)_{(x, \dot{x})}(x, \dot{x}) : T_{(x, \dot{x})}TM \lra T_{(x, \dot{x})}TM$$
            for $a=1,2$, to prove that the map $T(TR_d^2)_{(x,\dot{x})} - T(TR_d^1)_{(x, \dot{x})}$ is the identity map at the zero section $(0,0)_{(x,\dot{x})}$, from $T_{(x, \dot{x})} TM$ to itself.

            We can calculate 
                $$\dfrac{d}{ds}\bigg|_{s=0} \left( R_d^a(x, sy), \partial_{x} R_d^a(x, sy) \dot{x} + \partial_y R_d^a(x, sy) s \dot{y} \right)$$
            
            At $(x, \dot{x}, 0, 0)$, the map $T_{(0,0)_{(x, \dot{x})}}(TR_d^a)_{(x, \dot{x})}$ is thus given by:

            $$\pmat{
            \partial_{y^j}(R_d^a)^i(x, 0) & 0 \\
            \partial_{x^k} \partial_{y^j} (R_d^a)^i(x,0) \dot{x}^k & \partial_{y^j}(R_d^a)^i(x,0)
            }$$

            Thus, using the properties of the discretization map $R_d$, we have the Jacobian matrix of $(TR_d^2)_{(x, \dot{x})} - (TR_d^1)_{(x, \dot{x})}$ at $(0,0)_{(x, \dot{x})}$ as:
            $$\pmat{
            \partial_{y}(R_d^2-R_d^1)(x,0) & 0 \\
            \partial_x(\partial_y(R_d^2 - R_d^1)(x,0))\dot{x} & \partial_y (R_d^2-R_d^1)(x,0)
            }
            $$
            which is indeed equal to the identity $\text{ Id}_{2n \times 2n}$, since $\partial_y (R_d^2-R_d^1)(x,0) = \text{ Id}_{n \times n}$ which also implies $\partial_x(\partial_y(R_d^2 - R_d^1))(x,0) = 0$
        \end{enumerate}
    \end{proof}
\end{prop}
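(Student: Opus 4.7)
The plan is to verify the two axioms of Definition \ref{eq:ret} directly in local coordinates, using the explicit form of the canonical involution $\kappa_M$ together with the two axioms known to hold for $R_d$. Writing a point of $TTM$ in coordinates $(x, y, \dot{x}, \dot{y})$ adapted to the canonical projection $\tau_{TM}: TTM \to TM$, the zero element over $v = (x, \dot{x}) \in TM$ has the form $(x, \dot{x}, 0, 0)$, and applying $\kappa_M$ produces $(x, 0, \dot{x}, 0)$ in the coordinates adapted to $T\tau_M$. Throughout I will use the standard identification $T(M \times M) \cong TM \times TM$, so that $R_d^T = TR_d \circ \kappa_M$ is indeed a map $TTM \to TM \times TM$.

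For the zero-section axiom, I would evaluate $R_d^T(x, \dot{x}, 0, 0) = TR_d(x, 0, \dot{x}, 0)$. Using the general formula $TR_d(a, b; c, d) = \bigl(R_d(a, b),\, DR_d(a, b)(c, d)\bigr)$, the base component equals $R_d(x, 0) = (x, x)$ by the first axiom for $R_d$. For the fiber component I would differentiate the identity $R_d(x', 0_{x'}) = (x', x')$ along a curve through $x$ with velocity $\dot{x}$, giving $D_x R_d(x, 0)\dot{x} = (\dot{x}, \dot{x})$. Packaging the two parts yields $(x, x, \dot{x}, \dot{x})$, which under the identification above is exactly $(v, v)$.

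For the second axiom, I would compute, componentwise for $a = 1, 2$, the Jacobian of
\[
(R_d^T)^a(x, \dot{x}, y, \dot{y}) \;=\; \bigl(R_d^a(x, y),\; \partial_x R_d^a(x, y)\,\dot{x} + \partial_y R_d^a(x, y)\,\dot{y}\bigr)
\]
with respect to the tangent-fibre variables $(y, \dot{y})$, evaluated at $(0, 0)$. This produces a block lower-triangular matrix with diagonal blocks equal to $\partial_y R_d^a(x, 0)$ and off-diagonal block $\partial_x\partial_y R_d^a(x, 0)\,\dot{x}$. Subtracting $a=1$ from $a=2$, the diagonal block becomes $\partial_y(R_d^2 - R_d^1)(x, 0) = \mathrm{Id}_n$ by the second axiom for $R_d$; since this is constant in $x$, differentiating once more in $x$ kills the off-diagonal block. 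Thus the difference equals $\mathrm{Id}_{2n}$ acting on $T_v TM$, as required.

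The main obstacle is the bookkeeping around the two vector bundle structures on $TTM$: one has to track which coordinates are base-like and which are fibre-like before and after $\kappa_M$, and then interpret the image of $TR_d$ in $T(M \times M)$ under the splitting $T(M \times M) \cong TM \times TM$. Once these identifications are fixed, both axioms reduce to one-line consequences of the corresponding properties of $R_d$ and the constancy of the identity map in the $x$-direction; no genuinely new analytic input is needed.
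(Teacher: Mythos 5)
Your proposal is correct and follows essentially the same route as the paper: both verify the two axioms of Definition \ref{eq:ret} in local coordinates, using $\kappa_M$ to reorder the fibre variables, the identity $R_d(x,0_x)=(x,x)$ for the first axiom, and the same block lower-triangular Jacobian in $(y,\dot{y})$ whose diagonal blocks are $\partial_y R_d^a(x,0)$ and whose off-diagonal block dies after subtraction because $\partial_y(R_d^2-R_d^1)(x,0)=\mathrm{Id}_n$ is constant in $x$. The only difference is presentational: you derive the fibre component $(\dot{x},\dot{x})$ of the first axiom by explicitly differentiating $R_d(x',0_{x'})=(x',x')$ along a curve, which the paper asserts more tersely.
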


\begin{prop}
\label{prop:lift-commutator}
    Let $M$ and $N$ be $n$ dimensional manifolds and $\phi(x) = \tilde{x}$, where $\phi$ is a diffeomorphism and $x \in M, \tilde{x} \in N$. Let $TM$ and $TN$ be the tangent bundles of $M$ and $N$, respectively. By definition, if $(x,\dot{x}) \in TM$ and $(\tilde{x}, \dot{\tilde{x}}) \in TN$, then $T\phi(x,\dot{x}) = (\tilde{x}, \dot{\tilde{x}})$ through the same diffeomorphism. For a given discretization map $R_d^T$ on $TM$, $R^T_{d,\phi} := (T\phi \times T \phi) \circ R_d^T \circ TT\phi^{-1}$ is a discretization map on $TN$ (refer Figure \ref{fig:double-comm}).
    \newline
    
    \begin{proof}
        For any given $(\tilde{x}, \dot{\tilde{x}}) \in TN$, we have that:
        \begin{equation*}
        \begin{split}
            R^T_{d,\phi}(\tilde{x}, \dot{\tilde{x}}, 0, 0) & = \left( (T\phi \times T \phi) \circ R_d^T \circ TT\phi^{-1} \right) (\tilde{x}, \dot{\tilde{x}}, 0, 0) \\
            & =  (T\phi \times T \phi) \circ R_d^T (x, \dot{x}, 0, 0) \\
            & = (T \phi \times T \phi) (x, \dot{x}, x, \dot{x}) = (\tilde{x}, \dot{\tilde{x}}, \tilde{x}, \dot{\tilde{x}})
        \end{split}
        \end{equation*}
        which proves the first condition in \ref{eq:ret}.
        
        Now, for coordinates $(\tilde{x}, \dot{\tilde{x}}, \tilde{y}, \dot{\tilde{y}}) \in TTN$, 
        \begin{equation*}
            \begin{split}
                & (T_{(\tilde{x}, \dot{\tilde{x}}, 0, 0)} (R^T_{d,\phi})^2 - T_{(\tilde{x}, \dot{\tilde{x}}, 0, 0)} (R^T_{d,\phi})^1)(\tilde{x}, \dot{\tilde{x}}, \tilde{y}, \dot{\tilde{y}}) \\
                & = \dfrac{d}{ds} \bigg|_{s=0} [(T\phi \circ (R_d^T)^1 \circ TT\phi^{-1})(\tilde{x}, \dot{\tilde{x}}, s\tilde{y}, s\dot{\tilde{y}}) \\
                & - (T\phi \circ (R_d^T)^2 \circ TT\phi^{-1})(\tilde{x}, \dot{\tilde{x}}, s\tilde{y}, s\dot{\tilde{y}})] \\
                & = T_{(\tilde{x}, \dot{\tilde{x}})}T\phi \pmat{ \dfrac{d}{ds} \bigg|_{s=0} [(R_d^T)^1(s (TT\phi^{-1})(\tilde{x},\dot{\tilde{x}}, \tilde{y}, \dot{\tilde{y}})) \\
                - (R_d^T)^2(s (TT\phi^{-1})(\tilde{x},\dot{\tilde{x}}, \tilde{y}, \dot{\tilde{y}}))]
                } \\
                & = T_{(\tilde{x}, \dot{\tilde{x}})}T\phi((TT\phi^{-1})(\tilde{x}, \dot{\tilde{x}}, \tilde{y}, \dot{\tilde{y}})) = (\tilde{x}, \dot{\tilde{x}}, \tilde{y}, \dot{\tilde{y}})
            \end{split}
        \end{equation*}
        which proves the second condition in \ref{eq:ret}.

        Thus, using the linearity of the map $TT\phi$, we prove that $R^T_{d,\phi}$ is indeed a discretization map on $TN$.
    \end{proof}
\end{prop}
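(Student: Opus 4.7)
The plan is to verify the two defining conditions of a discretization map in Definition \ref{eq:ret} directly from the formula $R^T_{d,\phi} = (T\phi \times T\phi)\circ R_d^T \circ TT\phi^{-1}$, using Proposition \ref{prop:lift} to handle the properties of $R_d^T$ and the vector-bundle linearity of $T\phi$ (in fiber) and hence of $TT\phi$. The guiding principle is naturality: since $T\phi: TM \to TN$ is a vector-bundle isomorphism over $\phi$, its tangent $TT\phi$ is itself a vector-bundle morphism over $T\phi$ with respect to $\tau_{TM}$, so it sends the zero section of $TTN$ (viewed over $TN$) into the zero section of $TTM$, and restricts fiberwise to a linear isomorphism $T_{(\tilde{x},\dot{\tilde{x}})}TN \to T_{(x,\dot{x})}TM$.

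First I would check condition (1): take any $(\tilde{x}, \dot{\tilde{x}}) \in TN$ with corresponding $(x,\dot{x}) = T\phi^{-1}(\tilde{x},\dot{\tilde{x}})$. By fiber-linearity, $TT\phi^{-1}(\tilde{x}, \dot{\tilde{x}}, 0, 0) = (x, \dot{x}, 0, 0)$. Applying Proposition \ref{prop:lift} to $R_d^T$ yields $R_d^T(x,\dot{x},0,0) = (x,\dot{x},x,\dot{x}) \in TM \times TM$. Finally, $(T\phi \times T\phi)$ maps this to $(\tilde{x},\dot{\tilde{x}},\tilde{x},\dot{\tilde{x}})$, which is exactly the required diagonal element.

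For condition (2), I would use the usual curve description of tangent maps. For fixed $(\tilde{x},\dot{\tilde{x}})$, consider the curve $s \mapsto (\tilde{x},\dot{\tilde{x}}, s\tilde{y}, s\dot{\tilde{y}})$ in $TTN$ passing through the zero section at $s=0$. Because $TT\phi^{-1}$ is linear on the fiber of $\tau_{TN}$ over $(\tilde{x},\dot{\tilde{x}})$, this curve pulls back to $(x,\dot{x}, s\xi, s\dot{\xi})$ where $(\xi,\dot{\xi}) = (TT\phi^{-1})_{(\tilde{x},\dot{\tilde{x}})}(\tilde{y},\dot{\tilde{y}})$ restricted to the fiber. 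Differentiating $R^T_{d,\phi}$ along this curve and subtracting the two components gives
\begin{equation*}
T\phi\!\cdot\!\tfrac{d}{ds}\bigg|_{s=0}\!\bigl[(R_d^T)^2 - (R_d^T)^1\bigr](x,\dot{x}, s\xi, s\dot{\xi}).
\end{equation*}
By Proposition \ref{prop:lift} applied to $R_d^T$, the bracketed derivative is exactly $(\xi,\dot{\xi})$. Applying $T\phi$ in the fiber then cancels the earlier application of $TT\phi^{-1}$, returning $(\tilde{y},\dot{\tilde{y}})$. This yields the identity on $T_{(\tilde{x},\dot{\tilde{x}})}TN$ as required.

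The main obstacle is bookkeeping rather than substance: $TTN$ carries two vector-bundle structures (the canonical $\tau_{TN}$ and the secondary $T\tau_N$), and the discretization condition concerns the $\tau_{TN}$-fiber. I must therefore be careful that $TT\phi^{-1}$ is linear precisely on the fiber I am differentiating along, which is exactly the statement that $TT\phi$ is a vector-bundle morphism over $T\phi$ for the $\tau_{\bullet}$-structure. Once that identification is made explicit (as encoded by the central column of Figure \ref{fig:double-comm}), the computation reduces to chain rule and the properties of $R_d^T$ already established, so no new analytic difficulty arises.
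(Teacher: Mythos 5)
Your proposal is correct and follows essentially the same route as the paper's own proof: unwind the composition at the zero section using Proposition \ref{prop:lift} for condition (1), and differentiate along the fiber curve $s \mapsto (\tilde{x},\dot{\tilde{x}}, s\tilde{y}, s\dot{\tilde{y}})$, exploiting the $\tau$-fiber linearity of $TT\phi^{-1}$ and the chain rule so that $T\phi$ cancels $TT\phi^{-1}$, for condition (2). Your explicit remark about which of the two vector-bundle structures on $TTN$ carries the relevant linearity is a welcome clarification that the paper leaves implicit.
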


\subsection{Commutator diagram}
\label{sec:double-comm}
Fig. \ref{fig:double-comm} shows how one can move from one space to another using various maps. The commutator on the double tangent space (inner block) yields to us the relation between the tangent lifts of the discretization maps on $M$ and $N$, which are $R_d^T$ and $R_{d,\phi}^T$, respectively.
 We also see the relation between the canonical isomorphism on the double tangent space of $M$ and $N$, which can be related through the canonical involution maps $\kappa_M$ and $\kappa_N$. Due to this, we have different projection maps $\tau_{TM}$ and $T\tau_M$ acting on corresponding double tangent spaces onto the tangent space $TM$. The commutator on the tangent space (outer block) yields us the result from \cite{ashutosh} (see Fig. \ref{fig:commutator}), which is the relation between the discretization maps on $M$ and $N$, which are $R_d$ and $R_{d,\phi}$ respectively.

\section{Feedback Linearizable Discretizations for SODEs}
\label{sec:fl_disc}
Let $x \in M$ and $(x, \dot{x}) \in TM$ be the coordinates on the manifold $M$ and the induced coordinates on the tangent bundle of $M$, respectively. We know that a second-order differential equation is a vector field $X$ such that $\tau_{TM}(X) = T \tau_M (X)$. This implies that the vector field $X$ on $TM$ is a section of the second-order tangent bundle $TTM$. Locally, if we take coordinates $(x^i)$ on $M$ and induced coordinates $(x^i, \dot{x}^i)$ on $TM$, then:
\begin{equation}
    X = \dot{x}^i \frac{\partial}{\partial x^i} + X^i(x^i, \dot{x}^i) \frac{\partial}{\partial \dot{x}^i}
\end{equation}
To find the integral curves of $X$ is equivalent to solving the SODE:
\begin{equation}
\label{eq:sode}
    \dfrac{d^2}{dt^2}x(t) = X \left( x(t), \dfrac{d}{dt}x(t) \right) 
\end{equation}
Now, we wish to discretize this using the notion of the discretization map on $TM$. We would like to tangently lift a discretization on $M$ to obtain $R_d^T: TTM \lra TM \times TM$ as defined in Proposition \ref{prop:lift}. This yields the following numerical scheme \cite{21MBLDMdD}:
\begin{equation}
\label{eq:disc}
\begin{split}
    h X \left( \left(\tau_{TM} \circ \left(R_d^T\right)^{-1}\right)(x_k, y_k; x_{k+1}, y_{k+1})\right) \\ = \left(R_d^T\right)^{-1} (x_k, y_k; x_{k+1}, y_{k+1})
\end{split}
\end{equation}

\subsection{Example: Symmetric Discretization}
Let us say we choose the midpoint  discretization on $N={\mathbb R}^n$, denoted by $R_d$ of the following form:
\begin{equation}
    R_d(\tilde{x}, \tilde{y}) = \left(\tilde{x} - \dfrac{\tilde{y}}{2}, \tilde{x} + \dfrac{\tilde{y}}{2} \right)
\end{equation}
for some $(\tilde{x}, \tilde{y}) \in TN$.
From Proposition \ref{prop:lift}, we can find the tangent lift of $R_d$ as follows:
\begin{equation*}
\begin{split}
    D_{(\tilde{x},\tilde{y})}R_d(\tilde{x},\tilde{y}) & = \pmat{
    \text{Id} & -\frac{\text{Id}}{2} \\
    \text{Id} & \frac{\text{Id}}{2}
    } \\
    D_{(\tilde{x},\tilde{y})}R_d(\tilde{x},\tilde{y})(\dot{\tilde{x}}, \dot{\tilde{y}})^T & = \left( \dot{\tilde{x}} - \dfrac{\dot{\tilde{y}}}{2}, \dot{\tilde{x}} + \dfrac{\dot{\tilde{y}}}{2} \right)
\end{split}
\end{equation*}

\begin{equation}
    R_d^T(\tilde{x}, \dot{\tilde{x}}, \tilde{y}, \dot{\tilde{y}}) = \left( \tilde{x} - \dfrac{\tilde{y}}{2}, \tilde{x} + \dfrac{\tilde{y}}{2}, \dot{\tilde{x}} - \dfrac{\dot{\tilde{y}}}{2}, \dot{\tilde{x}} + \dfrac{\dot{\tilde{y}}}{2}\right)
\end{equation}
which is a discretization on $TN$.

Now, to lift $R_d^T$ to obtain $R_{d,\phi}^T$, we use Proposition \ref{prop:lift-commutator}, which gives:

\begin{equation}
    R^T_{d,\phi} = (T \phi \times T \phi)^{-1} \circ R_d^T \circ TT\phi
\end{equation}
 which is also a discretization map on $TM$.

 Using the numerical scheme from Equation \ref{eq:disc}, we obtain:
 \begin{equation}
     \begin{split}
         \dfrac{x_{k+1} - x_k}{h} & = \dfrac{y_{k+1} + y_k}{2}, \\
         \dfrac{y_{k+1} - y_k}{h} & = X \left(\dfrac{x_k + x_{k+1}}{2}, \dfrac{y_k + y_{k+1}}{2} \right)
     \end{split}
 \end{equation}
 which is the numerical scheme for a symmetric discretization of the SODE \ref{eq:sode}.

\subsection{MF-Linearizable discretizations}

We can also apply the discretization of second-order differential equations for controlled mechanical systems. 
\begin{thm}\label{theorem-feed}
Let $R_d$ be a discretization map for the linear mechanical system ($(\mathcal{LMS})_{(n,m)}$) given by  (\ref{sode-lin}) preserving linearity. Then the mechanical control system 
(\ref{SODE-nonlinear}) ($(\mathcal{MS})_{(n,m)})$  admits, using $R_{d,\phi}$, a discretization that is feedback linearizable.
\end{thm}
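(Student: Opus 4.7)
The plan is to combine the MF-linearization in continuous time with the commutativity encoded in Figure \ref{fig:double-comm} to show that the discrete scheme produced by $R_{d,\phi}^T$ on $TM$ is, after applying the continuous-time MF-feedback at each step, exactly the discrete scheme produced by $R_d^T$ on $TN$ for the linear mechanical system (\ref{sode-lin}). Since $R_d$ is assumed to preserve linearity, the resulting discrete system on $N$ is linear, and hence the discretization on $M$ is feedback linearizable by the same quadruple $(\phi, \alpha, \beta, \gamma)$ used in the continuous-time MF-linearization.

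I would carry this out in the following steps. First, invoke the MF-linearization data $(\phi, \alpha, \beta, \gamma)$ furnished by the hypothesis that $(\mathcal{MS})_{(n,m)}$ is MF-linearizable, and let $X$ denote the second-order vector field on $TM$ associated with (\ref{SODE-nonlinear}) after substituting the feedback $u_r = \alpha^r(x) + \sum_s \beta^r_s(x) v_s + \gamma^r(x,y)$; let $\tilde X$ be the corresponding linear SODE vector field on $TN$ associated with (\ref{sode-lin}). By construction of MF-equivalence, these satisfy $TT\phi \circ X = \tilde X \circ T\phi$. Second, apply Proposition \ref{prop:lift} to the given linear-preserving discretization map $R_d$ on $N$ to obtain its tangent lift $R_d^T = TR_d \circ \kappa_N$ on $TN$, and use Proposition \ref{prop:lift-commutator} (with the roles of $M$ and $N$ interchanged via $\phi^{-1}$) to define
\[
R_{d,\phi}^T := (T\phi^{-1}\times T\phi^{-1}) \circ R_d^T \circ TT\phi,
\]
which is a discretization map on $TM$. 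Third, write the numerical scheme (\ref{eq:disc}) on $TM$ using $R_{d,\phi}^T$ and $X$, and apply the linear map $TT\phi$ to both sides. On the right, one uses $TT\phi\circ (R_{d,\phi}^T)^{-1} = (R_d^T)^{-1}\circ (T\phi\times T\phi)$, which is immediate from the definition of $R_{d,\phi}^T$. On the left, one combines $TT\phi\circ X = \tilde X\circ T\phi$ with the naturality identity $T\phi\circ \tau_{TM} = \tau_{TN}\circ TT\phi$, so that the argument of $\tilde X$ becomes $\tau_{TN}\circ (R_d^T)^{-1}$ evaluated at the $\phi$-image data. Reading the two sides together, the pushforward of the $M$-scheme is exactly the $N$-scheme (\ref{eq:disc}) applied to $\tilde X$. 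Finally, since $R_d$ was assumed to map the continuous-time linear mechanical system (\ref{sode-lin}) to a discrete-time linear system, the discrete trajectory $(\tilde x_k, \tilde y_k)$ on $N$ obeys linear recursions in $(\tilde x, \tilde y, \tilde u)$, so the discrete trajectory $(x_k, y_k)$ on $M$ obeys recursions that become linear after the change of coordinates $\tilde x_k = \phi(x_k)$, $\tilde y_k = D\phi(x_k)y_k$ and the MF-feedback on $u_k$, which is precisely the definition of feedback linearizability of the discrete scheme.

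The part requiring the most care will be making the push-forward identity $TT\phi \circ X = \tilde X\circ T\phi$ precise in the controlled setting, i.e.\ correctly absorbing the MF-feedback $(\alpha,\beta,\gamma)$ into the vector field before lifting, and then verifying the naturality square
\[
T\phi \circ \tau_{TM} = \tau_{TN} \circ TT\phi
\]
together with the corresponding square for $T\tau_M$ under the canonical involution $\kappa_M$, so that the inner block of Figure \ref{fig:double-comm} truly commutes when one inserts the SODE condition $\tau_{TM}(X) = T\tau_M(X)$. Once these functorial identities are in place, the bulk of the argument is diagram-chasing; the only nontrivial geometric input is the mechanical-structure preservation already granted by Theorem \ref{thm:mfl} (or Proposition \ref{prop:planar_mech} in the planar case), which ensures that the feedback used to close the loop on $M$ is well defined and yields the linear SODE (\ref{sode-lin}) on $N$.
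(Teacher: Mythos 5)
Your proposal is correct and follows essentially the same route as the paper: conjugate the tangent-lifted discretization by $TT\phi$ to obtain $R_{d,\phi}^T$, use the MF-equivariance of the closed-loop vector field to push the scheme on $TM$ onto the linear scheme on $TN$, and conclude from the linearity-preservation of $R_d$. You are in fact more explicit than the paper about the diagram-chasing (the identity $TT\phi\circ(R_{d,\phi}^T)^{-1}=(R_d^T)^{-1}\circ(T\phi\times T\phi)$ and the naturality of $\tau$), while the paper additionally records that the MF-feedback must be evaluated at $(\bar{x}_k,\bar{y}_k)=\tau_M(R_{d,\phi}^{-1}(x_k,y_k;x_{k+1},y_{k+1}))$, i.e.\ at the same point where the vector field is sampled --- a detail your write-up leaves implicit.
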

\begin{proof}
Now if we transform the linearized system (\ref{sode-lin})
\[
            \frac{d^2{\tilde{x}}}{dt^2}  = A \tilde{x} + \sum_{s=1}^m b_s \tilde{u}_s
\]
using a discretization map $R_d$ that preserves linearity and defining $\tilde{u}_k=\tilde{u}(t_{k})$ we obtain as a discretization the 
controlled second–order linear difference equation
\begin{equation}\label{discrete-second}
\tilde{x}_{k+2}= \tilde{A} \tilde{x}_k+\tilde{B} \tilde{x}_{k+1} + \sum_{s=1}^m b_s (\tilde{u_k})_s
\end{equation}
Now since \[
    R^T_{d,\phi} = (T \phi \times T \phi)^{-1} \circ R_d^T \circ TT\phi
\]
then the discrete system 
  \begin{align*}
  &h X \left( \left(\tau_{TM} \circ \left(R_{d,\phi}^T\right)^{-1}\right)(x_k, y_k; x_{k+1}, y_{k+1}), u_k\right) \\ &= \left(R_{d,\phi}^T\right)^{-1} (x_k, y_k; x_{k+1}, y_{k+1})
  \end{align*}
is feedback linearizable to (\ref{discrete-second}) where $X$ is the second order system defined in (\ref{SODE-nonlinear}) and 
\begin{align*}
(\tilde{x}_k, \tilde{y}_k)&=T\phi(x_k, y_k)\\
                    (u_k)_r &= \gamma_{ij}^l(\bar{x}_k) \bar{y}_k^i\bar{y}_k^j + \alpha_l(\bar{x}_k) + \sum_{s=1}^m \beta_s^r(\bar{x}_k) (\tilde{u}_k)_s
\end{align*}
where $(\bar{x}_k, \bar{y}_k)=\tau_M ( R_{d,\phi}^{-1}(x_k, y_k; x_{k+1}, y_{k+1}))$
\end{proof}

\section{Example}
\label{sec:example}

\subsection{Example 1}
\label{subsec:ex-1}
Here, we consider an example: a simple mechanical system - the inertia wheel pendulum.
\begin{figure}[h]
    \centering
    \includegraphics[width=0.3\textwidth]{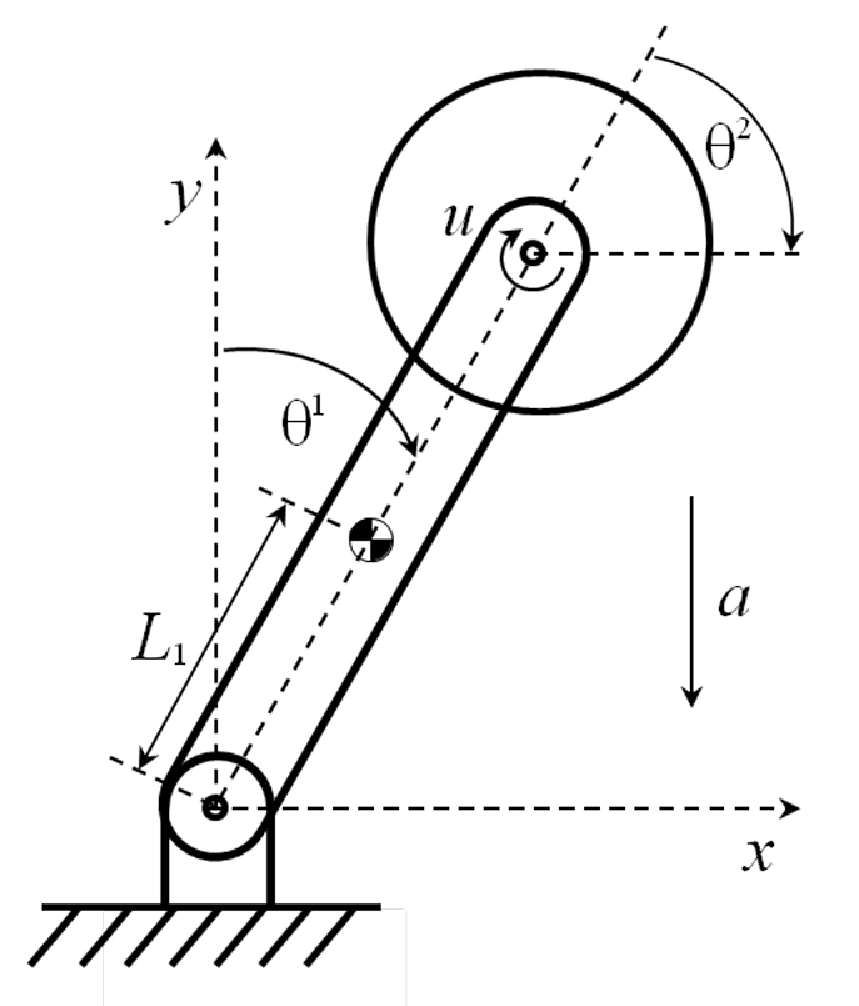}
    \caption{The inertia wheel pendulum}
    \label{fig:inertia-wheel}
\end{figure}
The equations of motion are given by
\begin{equation}
    \begin{split}
        \mathfrak{m}_{11} \Ddot{\theta}^1 + \mathfrak{m}_{12} \Ddot{\theta}^2 + c^1 = 0\\ 
        \mathfrak{m}_{21} \Ddot{\theta}^1 + \mathfrak{m}_{22} \Ddot{\theta}^2 = u
    \end{split}
\end{equation}

where 
\begin{equation*}
    \begin{split}
        \mathfrak{m}_{11} & = m_d + J_2, \
        \mathfrak{m}_{12} = \mathfrak{m}_{21} = \mathfrak{m}_{22}  = J_2 \\
        m_d  & = L_1^2(m_1 + 4 m_2) + J_1, \
        m_0  = a L_1(m_1 + 2m_2) \\
        c^1  & = -m_0 \sin{\theta^1}
    \end{split}
\end{equation*}

Taking $(\theta^1, \theta^2) = (x_1, x_2)$ and correspondingly $(\dot{\theta}^1, \dot{\theta}^2) = (y_1, y_2)$, we get the following equations:
\begin{equation}
\begin{split}
    \dot{x}_1  & = y_1, \
    \dot{x}_2  = y_2 \\
    \dot{y}_1  & = e_1 + g_1 u, \
    \dot{y}_2  = e_2 + g_2 u
\end{split}
\end{equation}
where 
\begin{equation*}
    \begin{split}
        e_1 = \dfrac{m_0}{m_d}\sin{x_1}, & \ \ g_1 = -\dfrac{1}{m_d} \\
        e_2 = -\dfrac{m_0}{m_d} \sin{x_1}, & \ \ g_2 = \dfrac{m_d + J_2}{m_d J_2}
    \end{split}
\end{equation*}

\subsubsection{MF-Linearization}

We will verify $(MD1 - MD3)$ from Proposition \ref{prop:planar_mech}, since the mechanical system here is a planar mechanical system.

First, we calculate:
\begin{equation}
\begin{split}
    \text{ad}_e g & = 0 - \pmat{
    \frac{m_0}{m_d} \cos{x^1} & 0 \\
    - \frac{m_0}{m_d} \cos{x^1} & 0
    }
    \pmat{
    -\frac{1}{m_d} \\
    \frac{m_d + J_2}{m_d J_2}
    } \\
    & = 
    \pmat{
    \frac{m_0}{m_d^2}\cos{x^1} \\
    -\frac{m_0}{m_d^2}\cos{x^1}
    }
\end{split}
\end{equation}

It can be seen that $g$ and $\text{ ad}_e g$ are independent (except at $x_1 = \pm \frac{\pi}{2}$). Thus, $MD1$ is satisfied. To verify $MD2$,
\begin{equation}
    \begin{split}
        \nabla_g g & = \left(\dfrac{\partial g_i}{\partial x_j}g_j + \Gamma^i_{jk}g_j g_k \right) \dfrac{\partial}{\partial x_i} = 0 \in \mathcal{E}^0 \\
        \nabla_{\text{ad}_e g} g & = 0 \in \mathcal{E}^0
    \end{split}
\end{equation}
which is also verified. Lastly, for $MD3$,
\begin{equation}
\label{eq:md3-1}
    \begin{split}
        \nabla^2_{g, \text{ad}_e g} \text{ad}_e g = \nabla^2_{\text{ad}_e g, g} \text{ad}_e g = \pmat{
        \frac{m_0^2}{m_d^5}\cos^2{x_1} \\
        - \frac{m_0^2}{m_d^5} \cos^2{x_1}
        }
    \end{split}
\end{equation}
Thus, we have:
\begin{equation}
    \nabla^2_{g, \text{ad}_e g} \text{ad}_e g - \nabla^2_{\text{ad}_e g, g} \text{ad}_e g = 0 \in \mathcal{E}^0
\end{equation}
Therefore, all the conditions $(MD1 - MD3)$ are satisfied, and the given system is $MF$-Linearizable.

We have the diffeomorphism $\Phi(x,y) = \left( \phi(x), D\phi(x)y\right)$, which is given by:
\begin{equation}
\begin{split}
        \tilde{x}_1  = \dfrac{m_d+J_2}{J_2}x_1 + x_2, \ 
        \tilde{x}_2  = \dfrac{m_0}{J_2}\sin{x_1} \\
        \tilde{y}_1  = \dfrac{m_d+J_2}{J_2}y_1 + y_2,\
        \tilde{y}_2  = \dfrac{m_0}{J_2}\cos{x_1}y_1
\end{split}
\end{equation}

Taking $\tilde{\text{x}} = \pmat{\tilde{x}_1 & \tilde{x}_2 & \tilde{y}_1 & \tilde{y}_2 }^T$, such that the linearized equations become:
\begin{equation}
    \dfrac{d}{dt} \tilde{\text{x}}
    = A \tilde{\text{x}} + C \tilde{u}
\end{equation}

Here, the matrices $A = \pmat { 0 & 0 & 1 & 0 \\ 0 & 0 & 0 & 1 \\ 0 & 1 & 0 & 0 \\ 0 & 0 & 0 & 0 }$, $C = \pmat{0 \\ 0 \\ 0 \\ 1}$ and
$\tilde{u} = \psi (x, y, u)$ is the auxiliary control, such that:
\begin{equation}
    \tilde{u} = -\dfrac{m_0}{J_2} \sin{x_1}y_1^2 + \dfrac{m_0^2}{2m_dJ_2}\sin{2x_1} - \dfrac{m_0}{m_d J_2} \cos{x_1} u
\end{equation}

From Theorem \ref{theorem-feed} using for instance the symmetric map $R_d$, we obtain the discretization $R_{d,\phi}$ and the corresponding linearizable discretization of the initial system: 
\[
\begin{array}{l}
\dfrac{x_{1,k+1}-x_{1,k}}{h}+\dfrac{J_2}{m_d+J_2}\left(\dfrac{x_{2,k+1}-x_{2,k}}{h}-\dfrac{y_{2,k+1}+y_{2,k}}{2}\right) \\ = y_{1,k+1/2}\\
\dfrac{\sin{x_{2,k+1}}-\sin{x_{2,k}}}{h}=\dfrac{y_{1,k}\cos{x_{1,k}}+\sin{x_{2,k}}
y_{1,k}\cos{x_{1,k}}}{2}\\
\dfrac{m_d+J_2}{J_2}\dfrac{y_{1,k+1}-y_{1,k}}{h}+\dfrac{y_{2,k+1}-y_{2,k}}{h}= \dfrac{m_0}{J_2}\sin x_{k+1/2}\\
\dfrac{m_0}{hJ_2}\left(
y_{1,k+1}\cos x_{1,k+1}-y_{1,k}\cos x_{1,k}\right)
=\tilde{u}_k
\end{array}
\]
where
\begin{align*}
    \tilde{u}_k &= -\dfrac{m_0}{J_2} \sin{x_{1,k+1/2}}y_{1,k+1/2}^2 + \dfrac{m_0^2}{2m_dJ_2}\sin{2x_{1,k+1/2}}\\
    &- \dfrac{m_0}{m_d J_2} \cos{x_{1,k+1/2}} u_k
\end{align*}
\subsubsection{Stabilization}
We use the classical pole placement technique to obtain a control gain matrix $K$, such that $\tilde{u} = - K \tilde{x}$.

Let us choose the poles of the closed-loop system to be:
\begin{equation}
    \lambda = -10, -20, -30, -40
\end{equation}

Correspondingly, we obtain 
\begin{equation}
    K = \bmat{240000 & 3500 & 50000 & 100}
\end{equation}

We denote $\text{x} = \bmat{x_1 & x_2 & y_1 & y_2}^T$ to get:
\begin{equation}\label{eq:feedback}
    \dfrac{d}{dt} \text{x}
    = \left( A - C K \right) \text{x}
\end{equation}

\subsubsection{Discretization}
If we have the system as $\dot{\text{x}} = (A-CK)\text{x} = F(\text{x})$. 
Let $h$ denote a (fixed) sampling time and $h' = \frac{h}{2}$. We utilize the symmetric discretization formulated in Section \ref{sec:fl_disc}:
\[
F(\text{x}_k; h/2) = F(\text{x}_{k+1}; -h/2)
\]
$$
\text{x}_k + h' (A-CK) \text{x}_k = \text{x}_{k+1} - h'(A-CK) \text{x}_{k+1} 
$$
\begin{equation}
    \therefore \text{x}_{k+1} = (I - h'(A-CK))^{-1}(I + h'(A-CK)) \text{x}_k
\end{equation}

\subsubsection{Results}
We use the following parameters from \cite{nowicki} and \cite{SPONG20011845}:
\begin{equation}
\begin{split}
    L_1 & = 0.063 \ [m] \\
    m_1 & = 0.02 \ [kg] \\
    m_2 & = 0.3 \ [kg] \\
    J_1 & = 47 \cdot 10^{-6} \ [kg \cdot m^2] \\
    J_2 & = 32 \cdot 10^{-6} \ [kg \cdot m^2] \\
    a & = 9.81 \ [m s^{-2}] \\
    m_0 & = 0.3832 \ [kg \cdot m^2 s^{-2}] \\
    m_d & = 49 \cdot 10^{-4} \ [kg \cdot m^2] 
\end{split}
\end{equation}

The comparison results between the proposed discretization scheme and ODE45 for the system, for a sampling time of $h = 0.01$, and initial conditions $\theta^1(0) = \frac{\pi}{4}, \theta^2(0) = \dot{\theta}^1(0) = \dot{\theta}^2(0) = 0$ are shown in Figure \ref{fig:sim-results}. The  error is plotted in Figure \ref{fig:e1-ed1}.
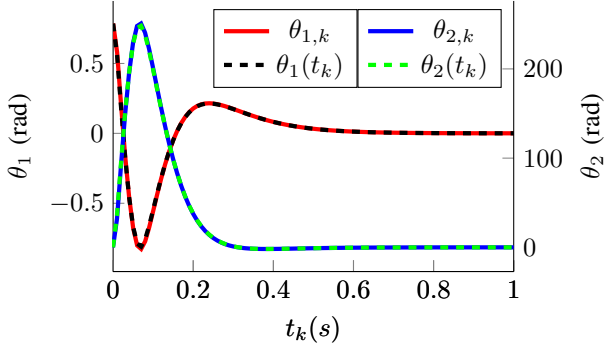
\begin{figure}[ht]
  \begin{center}
    \begin{tikzpicture}
      \begin{axis}[
          width=0.8\linewidth,
          height=0.6\linewidth,
          xlabel=$t_k (s)$,
          ylabel=$\theta_1$ (rad),
          xmin = 0, xmax = 1,
          axis x line* = none,
          axis y line* = left,
          legend style={at={(0.61,0.97)}}
        ]
        \addplot[mark=none, red, ultra thick] table[x=t,y=theta1, col sep=comma]{states_original.csv};
        \addlegendentry{$\theta_{1,k}$}
        \addplot [mark=none, black, dashed, ultra thick]
        table[x=t,y=theta1,col sep=comma]{states_ode45.csv}; 
        \addlegendentry{$\theta_1(t_k)$}
      \end{axis}
    \begin{axis}[
          width=0.8\linewidth,
          height=0.6\linewidth,
          xlabel=$t_k (s)$,
          ylabel=$\theta_2$ (rad),
          xmin = 0, xmax = 1,
          axis x line* = none,
          axis y line* = right,
          legend pos=north east
        ]
        \addplot[mark=none, blue, ultra thick] table[x=t,y=theta2, col sep=comma]{states_original.csv};
        \addlegendentry{$\theta_{2,k}$}
        \addplot [mark=none, green, dashed, ultra thick]
        table[x=t,y=theta2,col sep=comma]{states_ode45.csv};
        \addlegendentry{$\theta_2(t_k)$}
      \end{axis}  
    \end{tikzpicture}
    \caption{System states $x_k$ for symmetric discretization plotted against exact discretization (ODE45) $x(t_k)$ for $t_k \in [0, 1]$}
    \label{fig:sim-results}
  \end{center}
\end{figure}

\begin{figure}[ht]
  \begin{center}
    \begin{tikzpicture}
    \begin{axis}[
          width=0.8\linewidth,
          height=0.6\linewidth,
          xlabel=$t_k (s)$,
          ylabel=$|| e(k) ||$,
          xmin = 0, xmax = 1,
          axis x line* = none,
          axis y line* = left,
          legend pos=north east
        ]
        \addplot[mark=none, blue, thick] table[x=t,y=e1, col sep=comma]{error_norms.csv};
        \addlegendentry{$\lVert \theta_{1,k}-\theta_1(t_k) \rVert$}
      \end{axis}  
      \begin{axis}[
          width=0.8\linewidth,
          height=0.6\linewidth,
          xlabel=$t_k (s)$,
          ylabel=$||\dot{e}(k)||$,
          xmin = 0, xmax = 1,
          axis x line* = none,
          axis y line* = right,
          legend style={at={(0.97,0.77)}}
        ]
        \addplot[mark=none, red, thick] table[x=t,y=ed1, col sep=comma]{error_norms.csv};
        \addlegendentry{$\lVert \dot{\theta}_{1,k}-\dot{\theta}_1(t_k) \rVert$}
      \end{axis}  
    \end{tikzpicture}
    \caption{Magnitude of error norm for $\theta_1$ and $\dot{\theta}_1$}
    \label{fig:e1-ed1}
  \end{center}
\end{figure}
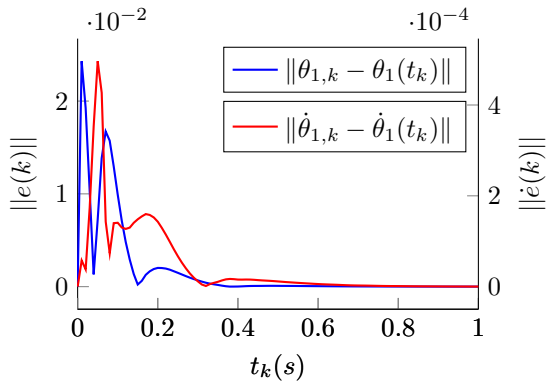

\subsection{Example 2}
Now we consider another example: the rotational dynamics on $M = SO(3)$,  which is described by the following set of equations:
\begin{equation}\label{eq:dyn-rot}
    \begin{split}
        \dot{\textbf{R}}(t) =& \textbf{R}(t)  \widehat{\Omega} (t)  \\
        \dot{\Omega}(t) =& J^{-1} \left( -\Omega(t) \times J\Omega(t) + \tau(t) \right)
    \end{split}
\end{equation}

where $\textbf{R}(t) \in SO(3), \Omega(t) \in \R[3] , \tau(t) \in \R[3]$, 
$$\widehat{\Omega} = \begin{bmatrix}
    0 & -\Omega_3 & \Omega_2 \\ \Omega_3 & 0 & -\Omega_1 \\ -\Omega_2 & \Omega_1 & 0
\end{bmatrix} \in \mathfrak{so}(3) \equiv TM$$
Note that we can simplify the above dynamics \eqref{eq:dyn-rot} by taking the following change of input:

\begin{equation}\label{eq:inp-trans}
    \tau(t) = \Omega(t) \times J\Omega(t) + Ju(t)
\end{equation}
where $u(t) \in \R[3]$ is the new control signal. This yields the modified dynamics on $SO(3)$:
\begin{equation}\label{eq:two-odes-rot}
\begin{split}
    \dot{\textbf{R}}(t) =& \textbf{R}(t) \widehat{\Omega}  (t)\\
    \dot{\Omega}(t) =& u(t)
\end{split}
\end{equation}

This can also be written as an SODE of the form:
\begin{equation}\label{eq:sode-rot}
    \Ddot{\textbf{R}}(t) = \mathbf{R}(t) (\widehat{\Omega}(t))^2 + R \widehat{u}(t)
\end{equation}

We can see that this is of the form $\Ddot{x} = f(x,u)$, which is not a control affine form. We can convert this to a control affine form through a transformation $\mathcal{V} : \R[3 \times 3] \lra \R[9]$ given in the Appendix \ref{appendix-a}. Thus choosing, $x = [\mathcal{V}(\textbf{R}) ,\Omega]^T \in \R[9]$ we get the set of equations:
\begin{equation}
        \dot{x}(t) = f(x(t)) + g(x)u(t)
\end{equation}
where $f(x) = \begin{bmatrix}
    r_{i2}\Omega_3 - r_{i3}\Omega_2 \\
    r_{i3}\Omega_1 - r_{i1}\Omega_3 \\
    r_{i1}\Omega_2 - r_{i2} \Omega_1 \\
    0_{3 \times 1} 
\end{bmatrix}$, $g(x) = \begin{bmatrix}
    0_{9 \times 3} & I_{3 \times 3}
\end{bmatrix}^T$

\subsubsection{MF-Linearization}
In order to prove that the system in Equation \eqref{eq:two-odes-rot} is feedback linearizable while preserving the mechanical structure, we write the actual dynamics on $TSO(3)$. Let $x(t) = R(t) \in SO(3)$. Thus, we have:
\begin{equation}\label{eq:so3-dyn}
    \begin{split}
        \dot{x}(t) =& y(t) \\
        \dot{y}(t) =& y(t) x^T(t) y(t) + x(t) \widehat{u}(t)
    \end{split}
\end{equation}

where $y(t) \in \R[3 \times 3], u(t) \in \R[3]$. Comparing to the general SODE form in \eqref{SODE-nonlinear}, one can observe that $e(x) = 0$ and $g(x)$ are linear functions, i.e. $\pm x_{ij}$. Thus, it is straightforward to prove that the conditions for MF-Linearizability in Theorem \ref{thm:mfl} are satisfied.

Now we consider the following diffeomorphism:
\begin{equation}
    \begin{split}
        \xi(t) = [\widecheck{\text{Log}(R(t))}] \\
        \eta(t) = \Omega(t)
    \end{split}
\end{equation}

where $\widecheck{(\cdot)} : \mathfrak{so}(3) \lra \R[3]$ is the inverse skew-symmetry (or vectorize) operator, $\text{Log} : SO(3) \lra \mathfrak{so}(3)$ is the standard logarithm map, and $\xi, \eta \in \R[3]$. Also, the inverse diffeomorphism yields $R(t) = \exp(\widehat{\xi}(t)), \Omega(t) = \eta(t) $.

Choosing $z(t) = [\xi(t), \eta(t)]^T$ such that $z \in N \equiv \R[6]$, we have the feedback linearized dynamics:
\begin{equation}
    \dot{z}(t) = Az(t) + Bv(t)
\end{equation}
where $A = \begin{bmatrix}
    0_{3 \times 3} & I_{3 \times 3} \\
    0_{3 \times 3} & 0_{3 \times 3}
\end{bmatrix}, B = \begin{bmatrix}
    0_{3\times 3} \\ I_{3 \times 3}
\end{bmatrix}$.

It is interesting to note that the initial nonlinear equations have reduced to a single linear ODE on $\R[6]$ where the feedback linearizing control is $v(t) = \tau(t) - \Omega(t) \times J\Omega(t)$.

\subsubsection{Stabilization}
We again use feedback control to stabilize the system on $N$. We choose $v(t) = -Kz(t)$, where $K = [K_1 \ K_2]$ s.t. $K_1, K_2 \in \R[3 \times 6]$. This yields again, an equation similar to the previous example in \eqref{eq:feedback}.

\subsubsection{Discretization}
Here, let us choose the Forward Euler map for discretization, i.e. $R_d(x, v) = (x, x + v)$. Applying this map to the dynamics on $N$, we get the following discretization scheme:

\[
    z_{k+1} = (I + h(A-BK))z_k
\]
where $h$ is the step-size, $z_k = [\xi_k, \eta_k]^T$.
This yields:
\begin{equation}\label{eq:rot-disc-fl}
\begin{split}
\xi_{k+1} =& \xi_k + h \eta_k \\
\eta_{k+1} =& \eta_k - hK_1\xi_k - hK_2 \eta_k
\end{split}
\end{equation}

Lifting this scheme back, we obtain:
\begin{equation}\label{eq:rot-disc-og}
\begin{split}
    R_{k+1} =& R_k \exp(h\widehat{\Omega}_k) \\
    \Omega_{k+1} =& \Omega_k - hK_1 [\widecheck{\text{Log}(R_k)}] - hK_2\Omega_k
\end{split}
\end{equation}

\subsubsection{Results}

We use the following parameters for the simulations:

\begin{equation}
    \begin{split}
        R(0) &= \bmat{\cos{\pi/2} & 0 & -\sin{\pi/2} \\ 0 & 1 & 0 \\ \sin{\pi/2} & 0 & \cos{\pi/2}}, \Omega(0) = \bmat{0 \\ 0 \\ 0} \\ 
        K_1 &= 5I , K_2 = 10I \\
    \end{split}
\end{equation}

\vspace{-5pt}
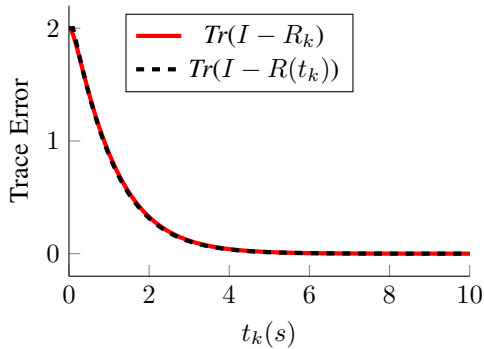
\begin{figure}[ht]
  \begin{center}
    \begin{tikzpicture}
      \begin{axis}[
          width=0.8\linewidth,
          height=0.6\linewidth,
          xlabel=$t_k (s)$,
          ylabel=Trace Error,
          xmin = 0, xmax = 10,
          axis x line* = none,
          axis y line* = left,
          legend style={at={(0.71,0.97)}}
        ]
        \addplot[mark=none, red, ultra thick] table[x=t,y=err, col sep=comma]{rot-dyn.csv};
        \addlegendentry{\textit{Tr}($I - R_k$)}
        \addplot [mark=none, black, dashed, ultra thick]
        table[x=t_k,y=err_k,col sep=comma]{rot-dyn.csv}; 
        \addlegendentry{\textit{Tr}($I - R(t_k)$)}
      \end{axis}
    \end{tikzpicture}
    \vspace{-10pt}
    \caption{Error $e_k$ for forward Euler discretization plotted against exact discretization (ODE45) $e(t_k)$ for $t_k \in [0, 10]$}
    \label{fig:rot-results}
  \end{center}
\end{figure}

\begin{figure}[ht]
  \begin{center}
    \begin{tikzpicture}
      \begin{axis}[
          width=0.8\linewidth,
          height=0.6\linewidth,
          xlabel=$t_k (s)$,
          ylabel=$\omega_k$  (rad/s),
          xmin = 0, xmax = 10,
          axis x line* = none,
          axis y line* = left,
          legend style={at={(0.61,0.97)}}
        ]
        \addplot[mark=none, red, ultra thick] table[x=t,y=p, col sep=comma]{rot-dyn.csv};
        \addlegendentry{$p$}
        \addplot[mark=none, blue, ultra thick] table[x=t,y=q, col sep=comma]{rot-dyn.csv};
        \addlegendentry{$q$}
        \addplot[mark=none, green, thick] table[x=t,y=r, col sep=comma]{rot-dyn.csv};
        \addlegendentry{$r$}
      \end{axis}
    \end{tikzpicture}
    \vspace{-10pt}
    \caption{Angular velocities of the rigid body stabilizing to $[0, 0, 0]^T$ }
    \label{fig:ang-results}
  \end{center}
\end{figure}
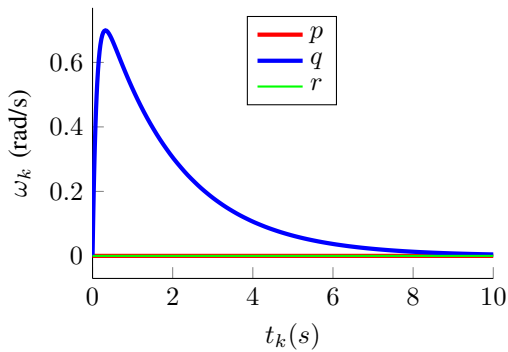

The simulation results for stabilization of rigid body dynamics is shown. Figure \ref{fig:rot-results} shows the error in the rotation matrix trace, while Figure \ref{fig:ang-results} shows the angular velocities.
 
\section{Conclusions}
This paper provides a theoretical basis for further developments in feedback linearizable discretizations of second-order mechanical systems. In a forthcoming paper, we plan to propose a method to functionally compose discretizations to obtain higher-order integrators, using multi-rate sampling, that are feedback linearizable (see \cite{ashutosh}). Moreover, we plan to analyze discrete Sundman transformations in this setting. 


\vspace{-5pt}
\section{Appendix}
\subsection{Control-affine formalization on $SO(3)$}
\label{appendix-a}
In order to translate the dynamics on $\mathfrak{so}(3)$ to the equivalent dynamics on $T\R[n]$ where $n=9$ is chosen appropriately in order to vectorize the equations, we can utilize the diffeomorphism $\mathcal{V} : \R[ 3 \times 3] \lra \R[9]$ such that:

\[
    \begin{bmatrix}
        r_{11} & r_{12} & r_{13} \\
        r_{21} & r_{22} & r_{23} \\ 
        r_{31} & r_{32} & r_{33}
    \end{bmatrix} \mapsto \begin{bmatrix}
        r_{11} \\ r_{12} \\ r_{13} \\ \vdots \\  r_{31} \\ r_{32} \\ r_{33}
    \end{bmatrix}
\]

This transformation also converts the system into a control affine form, as seen in the dynamics and used in \cite{fl-so3}.

\subsection{Mechanical Feedback Linearization}
\label{appendix-b}
\begin{defn}
    Let $MF$ be a group of transformations such that:
    \begin{enumerate}
        \item coordinate transformations given by diffeomorphisms
            \begin{equation}
            \label{eq:mf_diff}
                    \phi: M  \lra N; \
                    x  \mapsto \tilde{x} = \phi(x)
            \end{equation}
        \item mechanical feedback transformations, denoted by $(\alpha, \beta, \gamma)$ such that
            \begin{equation}
                \begin{split}
                    u_r = \gamma_{jk}^r y^jy^k + \alpha_r(x) + \sum_{s=1}^m \beta_s^r(x) \tilde{u}_s
                \end{split}
            \end{equation}
or more compactly,
            \begin{equation}
                u = y^T \gamma y + \alpha + \beta \tilde{u}
            \end{equation}
    \end{enumerate}
\end{defn}

Now, using this group, we define the feedback linearization for mechanical systems:

\begin{defn}
    Two mechanical systems $(\mathcal{MS})_{(n,m)} = (M,\nabla, \mathfrak{g}, e)$ and $(\widetilde{\mathcal{MS}})_{(n,m)} = (N, \widetilde{\nabla}, \mathfrak{\tilde{g}}, \tilde{e})$ are mechanical feedback equivalent if there exists $(\phi, \alpha, \beta, \gamma) \in MF$ such that
    \begin{equation}
        \begin{split}
            \phi: M \lra N \ \ \phi(x) & = \tilde{x}, \
            \phi_*\left(\nabla - \sum_{r=1}^m g_r \otimes \gamma^r\right) = \widetilde{\nabla} \\
            \phi_*\left( \sum_{r=1}^m \beta^r_s g_r \right) & = \tilde{g}_s, \
            \phi_* \left( e + \sum_{r=1}^m g_r \alpha^r \right) = \tilde{e}
        \end{split}
    \end{equation}
\end{defn}

\bibliographystyle{IEEEtran}
\bibliography{IEEEabrv, ref}

\end{document}